\newtheorem{assumption}{Assumption}
\newtheorem{assum}{Assumption}
\newtheorem{theorem}{Theorem}
\newtheorem{lemma}{Lemma}
\newtheorem{definition}{Definition}
\newtheorem{remark}{Remark}
\newtheorem{cor}{Corollary}
\newtheorem{lem}{Lemma}
\newtheorem{example}{Example}
\title{Competition among Ride Service Providers with Autonomous Vehicles}
\begin{document}
\author{
\IEEEauthorblockN{Arnob Ghosh}
\IEEEauthorblockA{Electrical and Electronic Engineering Department\\ Imperial College of London\\ arnob.ghosh@imperial.ac.uk }
\and 
\IEEEauthorblockN{Randall Berry}
\IEEEauthorblockA{ Electrical and Computer Engineering Department\\  Northwestern University\\ rberry@ece.northwestern.edu}
}
\maketitle
\newcommand{\G}{\mathcal{G}}
\newcommand{\V}{\mathcal{V}}
\newcommand{\E}{\mathcal{E}}
\begin{abstract}
Autonomous vehicles (AVs) are attractive for ride service providers (RSPs) in part because they 
eliminate the need to compete for human drivers. We investigate a scenario where two RSPs with AVs  compete for customers. We model the problem as a game  where the RSPs select prices for each origin-destination pair over multiple time periods in an underlying graph representing the customers' desired trips.  Each RSP also decides the number of AVs to be stationed at each node at each time period to serve the customers' demands. The number of customers who avail service of a RSP depends on the price selected by the RSP and its competitor. Since the strategy choices available to a RSP depends on its competitor, we seek to compute a Generalized Nash equilibrium (GNE). We show that there may be multiple GNEs. However, when a RSP selects prices in order to deter its competitor when it is not serving  a source-destination pair, the game has a potential function and admits a unique GNE. We also compare the competitive prices with a monopoly price where only one RSP is in the market. Numerically, we show that if a network consists of two equal size spatial clusters of demand where the demand between clusters is low, the RSPs may partition the market, i.e, one cluster is served by only one RSP. Hence, the competitive price may become close to the monopoly price. 
\end{abstract}
\section{Introduction}
The transportation system is going through disruptive changes. Ride service providers (RSPs) (e.g., Uber, and Lyft) are increasingly serving passengers in most cities across the world \cite{ad1,ad2}. RSPs are exploring using autonomous vehicles (AVs), in part because  in traditional human driven vehicles, they need to compete with other RSPs for drivers and compensate them for their time spent driving. Instead, they can reduce operation costs by owning the fleet of AVs. The RSPs can control and dispatch AVs in order to meet the customers' demands. 

If RSPs own AVs, they will no longer compete for drivers. This may result in RSPs lowering prices to compete for customers. Whether customers see this benefit will depend in turn on how competitive the RSP market is and if RSPs, for example, have an incentive to segment the market geographically to reduce competition. We seek to gain insight into this question. 

We consider a stylized model where two RSPs operate over a region consisting of multiple locations. We formulate the region as a graph with the locations as the nodes. We assume that each RSP dispatches vehicles in order to satisfy the estimated demand at different locations. The demand seen by a certain RSP for an origin-destination pair $(j,l)$ depends on the price selected by the RSP, its competitor, and the total number of customers willing to go from origin $j$ to destination $l$ at each time. Each RSP decides the prices and the number of available vehicles it would dispatch for each origin-destination pair at each time instance in order to maximize the total profit over a given time horizon. We formulate the competition among the RSPs as a game theoretic problem and  show that the strategy space of a RSP is also constrained by the strategy of other RSP.  We seek to obtain a Generalized Nash equilibrium (GNE) \cite{gne}. We show that  there can be multiple GNEs depending on the action of the competitor when it is not serving an origin-destination pair. We show that when a RSP ``tries to deter" its competitor maximally, the resulting game has a potential function and such a GNE can be obtained by solving a convex optimization problem. Other GNEs appear to be more difficult to obtain. 

In order to investigate whether competition would exist among the RSPs, we compare the outcome in the duopoly scenario with another scenario where there is only one (monopoly) RSP. The RSP selects a price for each origin-destination pair ($j,l)$. The demand is then realized as a function of the price and the number of customers willing to travel from location $j$ to location $l$. We formulate a convex optimization problem where the monopoly RSP decides prices and the number of free vehicles to dispatch for each origin-destination pair in order to maximize its revenue. We denote the price selected by the RSP as the {\em monopoly price}. We characterize when the equilibrium prices selected in the duopoly model and the monopoly price for each origin-destination pair become identical. 

In Section~\ref{sec:numerical}, we numerically investigate a network consisting of two clusters of demand. The demand from one cluster to another one is varied. We show that when the demand is not balanced and the capacities of the RSPs are small, RSPs may partition the market between them. A cluster is served by only one RSP. {\em Hence, the price becomes equal to the monopoly price even thought there are two RSPs. However, as the demand becomes more balanced, the price decreases.} The price also decreases when the capacities of each RSP becomes high. 

{\em Related Literature}: Characterizing optimal price mechanisms for a RSP is of  great interest. Our work is related to several works on platform price design for a RSP \cite{feng,banerjee,sun,taylor,gurvich}. These papers model the stochasticity of the driver's opportunity cost and the demand, and determine optimal price mechanisms for a ride-sharing platform. However, the above papers did not consider the spatial and temporal structure of the underlying demand. Further, the above papers did not model the competition among multiple RSPs. In \cite{bernstein}, a competition model among RSPs was considered.  The strategic interaction between a regulator and an RSP was studied in \cite{yu2020}. However, \cite{bernstein,yu2020}  did not consider the spatial variation of the demand.

Recently, \cite{bimpikis,chen2019,berry} investigated the spatial variation of prices across a network for a RSP. However, these papers did not consider a competition model as there is only one RSP. Further, \cite{bimpikis,chen2019} did not consider any temporal variation of demand at a location. Thus, they also did not consider the time vehicles take going from one location to another, and so they did not model the routing decision of vehicles and its potential impact on the pricing. In \cite{berry}, the travel time was considered, however, the temporal variation of demand was not modeled. In \cite{YANG}, the optimal supply pattern for a taxi service provider over a network was considered for a fixed price set by the government. However, the dynamic nature of the price and the impact of competition among multiple RSPs on the pricing as well as the supply behavior was not considered.  

In other related literature, competition in a network model is widely studied by employing the Cournot network game \cite{bimpikis2019,barquin,cournot1}. However, in a transportation network, one needs to consider the routing decision of vehicles based on the origin and destination, which is not modeled in the Cournot game. The Cournot model assumes that the price is determined by the total supply and the price is same for the firms participating in the market at the same location. In contrast, we consider a competition model among RSPs where the prices can vary across the origin-destination pairs and may be different for different RSPs even for the same origin-destination pair. 

\section{System Model}
We assume that two RSPs are competing for customers in a region consisting of several locations. The region is modeled as a graph $\G=\{\V,\E\}$. For example, if the region is a city, then the city can be divided into a grid graph where every node represents each square block of the city. The edges are directed. If an edge exists from node $i$ to node $j$, then traffic can flow from node $i$ to node $j$.   We assume that the graph is strongly connected. 

Each RSP $i$ has a given number of AVs. A RSP can dispatch vehicles to several locations in order to satisfy the demand. The overall time horizon is $T$ and time is slotted with each slot  of duration $\tau$. The specific duration of a slot can be of any fixed value. The demand for travel to location $l$ from location $j$ is denoted as $D_{j,l,t}$ which is estimated by the RSPs. We assume that the estimation is correct throughout the $T$ time periods.  In Section~\ref{sec:stoc_demand} we extend our analysis to the setting where the RSPs only know the distribution of the demand within a time window.  

RSP $i$ selects a price $p_{i,j,l,t}$ at location $j$ towards location $l$ at time $t$. There exists an upper limit $p_{max}$ on the price. If a price exceeds the upper limit, no customer would choose the RSP. The customers would choose one of the other RSPs or not to ride according to the demand model described next.

\subsection{Demand Model}

The demand of customers for each RSP $i$ at location $j$ towards destination $l$  is assumed to be the following
\begin{align}\label{eq:dem}
D_{i,j,l,t}=\max\{D_{j,l,t}(\dfrac{1}{2}-\dfrac{1}{p_{max}}p_{i,j,l,t}+\dfrac{1}{2p_{max}}p_{k,j,l,t}),0\}
\end{align}
where $p_{k,j,l,t}$ is the price set by the RSP $k$, $k\neq i$. The demand for RSP $i$ would increase if $p_{k,j,l,t}$ increases since more customers would prefer RSP $i$ compared to RSP $k$. On the other hand, as the price $p_{i,j,l,t}$ increases the demand for RSP $i$ would decrease. Note that, compared to the basic Bertrand model for price competition, all the customers here do not necessarily select the RSP with the lowest price. This captures other preferences that customers may have regarding a given RSP that can arise in a practical market.  Such a demand model has been widely considered in the literature, e.g.  \cite{niyato2008,el2003,garmani2019}. 

The demand for RSP $i$ is impacted more by the change in the price $p_{i,j,l,t}$ compared to the price of RSP $k$, $p_{k,j,l,t}$. Hence, the co-efficient corresponding to $p_{i,j,l,t}$ is higher compared to the co-efficient corresponding to $p_{k,j,l,t}$ in the demand expression. For the ease of exposition, we consider that the co-efficient corresponding to $p_{i,j,l,t}$ is exactly twice that of the co-efficient corresponding to $p_{k,j,l,t}$. Our analysis will go through for any ratio of the co-efficient values greater than $1$. 

At any price greater than or equal to $p_{max}$ customers will not avail any service. Thus, $p_{max}$ is the least upper bound on the prices at which customers are willing to avail a service. When $p_{k,j,l,t}=p_{max}$, this is equivalent to a scenario where RSP $k$ is not present in the market from source $j$ to destination $l$ during time interval $t$. Thus, RSP $i$ can enjoy a monopoly power and its demand $D_{i,j,l,t}$ will only be zero when $p_{i,j,l,t}=p_{max}$. In general, note that $D_{i,j,l,t}$ is zero when
\begin{align}\label{eq:pricez}
p_{i,j,l,t}\geq \dfrac{p_{max}}{2}+\dfrac{p_{k,j,l,t}}{2}.
\end{align}
Hence,  if $p_{k,j,l,t}<p_{max}$, the demand $D_{i,j,l,t}$ can be $0$  even when $p_{i,j,l,t}<p_{max}$ since the other RSP is more attractive to the customers. 

If both $D_{i,j,l,t}>0$ and $D_{k,j,l,t}>0$, then the total number of customers who avail either of the RSPs' services is given by
\begin{align}\label{eq:totdem}
D_{j,l,t}(1-\dfrac{1}{2p_{max}}p_{i,j,l,t}-\dfrac{1}{2p_{max}}p_{k,j,l,t}).
\end{align}
If both $p_{i,j,l,t}$ and $p_{k,j,l,t}$ are equal to $0$, the total demand reaches the maximum value, $D_{j,l,t}$. Note that when the demand is lower than $D_{j,l,t}$, then the remaining customers would take another mode of transportation to reach the destination or may choose not to travel at all.


\subsection{Pricing strategy of the RSPs}
Since the RSPs only have AVs, they do not need to select any prices for human drivers. However, they need to select prices for the customers who are availing the services. 
Each RSP decides the prices for each source-destination pair at time $t$ in order to maximize its revenue. The revenue of RSP $i$ is
\begin{align}
\sum_{t=1}^{T}\sum_{j}\sum_{l}p_{i,j,l,t}D_{i,j,l,t}.
\end{align}
Note that the demand for a RSP depends on the price selected by the other RSP (cf.~(\ref{eq:dem})). 

Note that in practice, the RSPs may optimize for a smaller time window $T$ since the estimate for demand may be accurate for a smaller time-scale and can employ a sliding window approach by optimizing first from $t=1,\ldots, T$, then $t=2,\ldots, T+1$ and so on.

\subsection{Routing of the vehicles by the RSPs}
Each RSP also needs to route its vehicles in order to meet the demand. Note that a RSP can only serve customers at a certain location at a given time if it has vehicles at that location at that time. If a vehicle is not serving any customer, then, the vehicle is a {\it free vehicle} and can be routed by the RSP to some other location for picking up customers. Let $u_{i,j,l,t}$ be the number of free vehicles which have been routed  by RSP $i$ to location $l$ from location $j$ at time $t$. Let $x_{i,j,t}$ be the total number of vehicles operated by RSP $i$ at time $t$  at location $j$. 

A RSP can indirectly control the number of customers it would serve to a destination at a certain time. For example, if the demand at certain location is very small, the RSP may select a higher price for that location which would drive down the demand to that location so that it may instead deploy vehicles to other locations. 

A vehicle takes a certain time to move from one location to another. We define the tensor $A^t$ that specifies the exact number of time slots a vehicle takes  to move from location $i$ to location $j$ at time slot $t$. For example, if it takes $2$ time slots to go from location $i$ to location $j$ at time $t$, then in the tensor $A^t$, the element $A^t_{i,j,2}$ is $1$. We assume that at the start of the horizon each RSP knows all of the elements of $A^t$ for all time $t$ within the horizon.  Note that if the time horizon $T$ is small, this is a reasonable assumption, as the RSPs can have accurate predictions regarding the time a vehicle takes to go from one node to another node. 

Formally,
\begin{align}
A^t_{j,l,\tau}=\begin{cases} 1,\quad \text{if it takes exactly $\tau$ slots}\nonumber\\\text{ \indent to reach to location $l$ from $j$, }\nonumber\\
0, \quad \text{otherwise}.\end{cases}
\end{align}
We assume that $A^t_{j,l,0}=0$ for all source-destination pairs. Thus, vehicles take at least one time slot to travel over link $(j,l)$. Note that a vehicle may take different times to reach node $l$ from node $j$ and to reach node $j$ from node $l$;  hence, $A^t_{j,l,\tau}$ may not be symmetric. 
For all customers who have chosen RSP $i$ at time $t$ to travel from $j$ towards $l$, those vehicles will again be available after $\tau$ time slots at location $l$ if $A^t_{j,l,\tau}=1$. Thus, the number of vehicles owned by RSP $i$ that reach  location $j$ at time $t$ because of the rides from other locations towards location $j$ is given by
\begin{align}
\sum_{\tau<t}\sum_{l}D_{i,l,j,\tau}A^{\tau}_{l,j,t-\tau}.
\end{align}
Similarly, the number of free vehicles that will reach location $j$ when free vehicles are routed by the RSP $i$ from other locations to location $j$ is given by
\begin{align}
\sum_{\tau<t}\sum_{l}u_{i,l,j,\tau}A^{\tau}_{l,j,t-\tau}.
\end{align}

Therefore, the number of available vehicles of RSP $i$ at the end of time $t$ is given by
\begin{eqnarray}
x_{i,j,t}=x_{i,j,t-1}+\sum_{\tau<t}\sum_{l}D_{i,l,j,\tau}A^{\tau}_{l,j,t-\tau}+\nonumber\\
\sum_{\tau<t}\sum_{l}u_{i,l,j,\tau}A^{\tau}_{l,j,t-\tau}-\sum_{l}D_{i,j,l,t}-\sum_{l}u_{i,j,l,t}.\label{eq:state}
\end{eqnarray}
The last two terms represent the number of rides from location $j$ to other locations at time $t$ and the number of free vehicles routed to other locations from location $j$ at time $t$. We assume that $x_{i,j,0}$ is given and known to all the RSPs. Hence, the initial locations of the vehicles owned by RSP $i$ are known. Note that  since we have assumed that $A^{\tau}_{l,j,0}=0$, for any origin-destination pair $(l,j)$, thus, $D_{j,j,\tau}A^{\tau}_{j,j,t-\tau}=0$ for $t=\tau$. 

Furthermore, we assume that the vehicle supplies must satisfy the following two constraints:
\begin{eqnarray}
x_{i,j,t}\geq 0,\quad u_{i,j,l,t}\geq 0.\label{eq:control}
\end{eqnarray}

The constraints in (\ref{eq:control}) represent that the number of free vehicles  which are routed and the number of available vehicles at any location can not be negative. The constraints ensure that  the demand can not exceed the total number of available vehicles. Note that since the initial locations of the vehicles are given, the constraint in (\ref{eq:control}) will ensure that the total number of vehicles (sum of the number of vehicles which are in transit and which are free) never exceed the capacity $C_i$ of RSP $i$. 

Finally, we require the price $p_{i,j,l,t}$ must be non-negative and less than or equal to $p_{max}$:
\begin{align}\label{eq:priceconstrain}
0\leq p_{i,j,l,t}\leq p_{max}.
\end{align}

\subsection{Strategy of Each RSP}
We formulate the RSP's price selection and routing decision as a game theoretic problem where each RSP takes its decision in order to maximize its own profit. Hence, RSP $i$ solves the following
\begin{eqnarray}
\mathcal{P}: \text{maximize }&  F_i=(\sum_{t}\sum_{j}\sum_{l}p_{i,j,l,t}D_{i,j,l,t}-c_{j,l,t}D_{i,j,l,t}\nonumber\\& -c^{\prime}_{j,l,t}u_{i,j,l,t})\nonumber\\
\text{subject to }&  (\ref{eq:state}), (\ref{eq:control}), (\ref{eq:priceconstrain}).\nonumber
\end{eqnarray}
Here, $c_{j,l,t}$ represent the cost of routing one vehicle from $j$ to $i$ at time $t$, so that the second term in the objective represents the cost for all trips from location $j$ to location $l$. Likewise, $c'_{j,l,t}$ represents the per-vehicle cost of routing a free vehicle from $j$ to $i$ at time $t$, which in general we allow to be different from $c_{i,j,t}$, e.g., if it requires less energy to route a free vehicle. Hence, the third term in the objective represents the total cost for rerouting free vehicles from one location to other location. The decision variables are $p_{i,j,l,t}$ and $u_{i,j,l,t}$. The parameters $D_{i,j,l,t}$ and $x_{i,j,t}$ depend implicitly on the two decision variables. 

We assume that vehicles move from location $j$ to location $l$ by taking the path with the shortest time. For example, the vehicles may be equipped with an app like Google Maps that provides this information. The decision  of how to optimally choose paths for each route is beyond the scope of this paper. 

\begin{definition}\label{defn:strategy}
For $i=1,2$, let $S_i=\{p_{i,j,l,t},u_{i,j,l,t}\}$  be the strategy of RSP $i$ and let $\mathcal{S}_i$ be the strategy space of the RSP. Let $S_{-i}$ be the strategy of the RSP other than $i$ and let $\mathcal{S}_{-i}$ be the strategy space of the RSP other than $i$. 
\end{definition}
Note from (\ref{eq:dem}) that the demand $D_{i,j,l,t}$ inherently depends on the strategy of the other RSP ($p_{k,j,l,t}$). Specifically, given the price $p_{k,j,l,t}$, $k\neq i$, the price $p_{i,j,l,t}$ completely specifies the demand $D_{i,j,l,t}$.

The strategy space $\mathcal{S}_i$ for $i=1,2$ is characterized by the set of constraints of the optimization problem $\mathcal{P}$. Note from (\ref{eq:state}) that the strategy space $\mathcal{S}_i$ depends on the strategy $S_{-i}$ since the demand $D_{i,j,l,t}$ depends on the price selected by the RSP $k\neq i$. At times to make this explicit, we will denote RSP $i$'s strategy space as $\mathcal{S}_i(S_{-i})$.
When the strategy space of a player also depends on the strategy of other player, the appropriate solution concept is a generalized Nash equilibrium (GNE) instead of a Nash equilibrium. In the following, we define the strategy profile which constitutes a GNE.

\begin{definition}\label{degn:gne}
A Generalized Nash equilibrium (GNE) strategy profile for the RSPs is the strategy vector $\{S_1,S_2\}$ such that for each RSP $i=1,2$
\begin{align}
F_i(S^{\prime}_i,S_{-i})\leq F_i(S_i,S_{-i}) \forall S^{\prime}_i\in \mathcal{S}_i(S_{-i}).
\end{align}
where $F_i(\cdot)$ is the objective of RSP $i$ of the optimization problem $\mathcal{P}$. 
\end{definition}

\subsection{Multiple GNEs}
Note from (\ref{eq:pricez}) that if $D_{i,j,l,t}=0$, there can be multiple values of $p_{i,j,l,t}$ for which $D_{i,j,l,t}=0$.   Hence, different strategies can give the same demand for a RSP. Note that the price selected by RSP $i$, $i=1,2$ when $D_{i,j,l,t}=0$ can impact the strategy space $\mathcal{S}_{-i}$. Thus, there may be multiple GNEs. 

\begin{assumption}\label{assum1}
We consider that when $D_{i,j,l,t}=0$ for $i=1,2$ the price $p_{i,j,l,t}$ satisfies
\begin{align}\label{eq:wgne}
p_{i,j,l,t}=\dfrac{p_{max}}{2}+\dfrac{p_{k,j,l,t}}{2}
\end{align}
for $k\in \{1,2\}, k\neq i$. 
\end{assumption}
 In an equilibrium when $D_{i,j,l,t}=0$, any price $p_{i,j,l,t}$ as in (\ref{eq:pricez}) would not influence the revenue of RSP $i$. However, it would influence the strategy $S_{-i}$. Specifically, when the price is set as in (\ref{eq:wgne}), it is the smallest price the RSP $i$ can set when $D_{i,j,l,t}=0$. Thus, RSP $k\in \{1,2\}, k\neq i$, also needs to lower its price in order to maintain the same demand $D_{k,j,l,t}$. This corresponds to ``maximum way''  RSP $i$ can deter RSP $k$ or a `credible threat' from RSP $i$.  In a non-cooperative competitive setting, a RSP may want to inflict maximum harm on its competition. Assumption 1 represents the above behavior of a RSP. 
 
 Note that if $D_{i,j,l,t}>0$ for $i=1,2$ and for all $(j,l)$, then the prices of the RSPs becomes unique. We later show that under Assumption~\ref{assum1}, by solving a convex optimization problem we obtain a GNE. However, for other price choices when $D_{i,j,l,t}=0$, the strategy space becomes non-convex. Thus, determining a GNE may become more computationally challenging. 

%

\subsection{Convex Optimization}
Note from (\ref{eq:dem}) that the objective in Problem $\mathcal P$ is not concave due to the in piecewise linear structure of $D_{i,j,l,t}$.    In the following, we give an equivalent representation which is a convex optimization problem. First, we introduce a notation
\begin{definition}
Let $\tilde{F}_i(S_i,S_{-i})$ be
\begin{align}
&\tilde{F}_i(S_i,S_{-i})=\sum_{t}\sum_j\sum_{l}p_{i,j,l,t}D_{j,l,t}\left(\dfrac{1}{2}-\dfrac{p_{i,j,l,t}}{p_{max}}+\dfrac{p_{k,j,l,t}}{2p_{max}}\right)\nonumber\\
& -c_{j,l,t}D_{j,l,t}\left(\dfrac{1}{2}-\dfrac{p_{i,j,l,t}}{p_{max}}+\dfrac{p_{k,l,j,t}}{2p_{max}}\right)-c^{\prime}_{j,l,t}u_{i,j,l,t}.
\end{align}
\end{definition}
$\tilde{F}_i$   depends on the prices selected by the RSP $k$. The optimization problem in $\mathcal{P}$ is equivalent to the following form:
\begin{eqnarray}
\mathcal{P}_1 :  & \text{maximize }  \tilde{F}_i(S_i,S_{-i})\nonumber\\
\text{subject to } & (\ref{eq:state}), (\ref{eq:control}), (\ref{eq:priceconstrain})\nonumber\\
& D_{i,j,l,t}= D_{j,l,t}\left(\dfrac{1}{2}-\dfrac{p_{i,j,l,t}}{p_{max}}+\dfrac{p_{k,l,j,t}}{2p_{max}}\right)\label{eq:dem_pos}\\
& D_{i,j,l,t}\geq 0. \label{eq:equi}
\end{eqnarray}
The difference between this and the previous formulation in $\mathcal{P}$ is that here the maximization with $0$ in (\ref{eq:dem}) is dropped in the objective and instead moved into the constraints.  The objective $\tilde{F}_i$  is concave in the decision variable $p_{i,j,l,t}$; however, it also depends on the prices (strategy) selected by RSP $k$. Further, the set of constraints makes the strategy space $\mathcal S_i(S_{-i})$ a closed compact convex set for any choice of $S_{-i}\in \mathcal{S}_{-i}$.

Because of the constraint $D_{i,j,l,t}\geq 0$, we observe that 
\begin{align}
p_{i,j,l,t}\leq \dfrac{p_{max}}{2}+\dfrac{p_{k,j,l,t}}{2}.
\end{align}
Thus, the solution of the optimization problem gives $p_{i,j,l,t}=\dfrac{p_{max}}{2}+\dfrac{p_{k,j,l,t}}{2}$ when $D_{i,j,l,t}=0$. Hence, it satisfies Assumption~\ref{assum1}. 



\subsection{Potential Game}
Let $\mathcal G_1$ denote the game corresponding to $\mathcal{P}_1$.   With a slight abuse of notation, we will continue to use $\mathcal S_i$ to denote player $i$'s strategy set in this game (as given by the constraints in $\mathcal{P}_1$). We next show that $\mathcal G_1$  is a potential game, i.e., it admits a potential function as defined next. 

\begin{definition}\label{defn:potential}
A function $\Phi(\cdot)$ is a potential for $\mathcal G_1$, if $\Phi(S_i,S_{-i})-\Phi(S^{\prime}_i,S_{-i})=\tilde{F}_i(S_i,S_{-i})-\tilde{F}_i(S_i^{\prime},S_{-i})$ for all $S_i$ and $S^{\prime}_i$ in $\mathcal S_i(S_{-i})$ and all 
$i$.

For differentiable functions, the above definition is equivalent to \cite{monderer1996potential}
\begin{align}
\dfrac{\partial \Phi}{\partial S_i}=\dfrac{\partial \tilde{F}_i}{\partial S_i}\quad \forall i,\nonumber
\end{align}
\end{definition}

Consider the function 
\begin{align}
\Phi=& \sum_{t}\sum_{j}\sum_{l} \dfrac{p_{i,j,l,t}p_{k,j,l,t}D_{j,l,t}}{2p_{max}}+\nonumber\\& \sum_{i}\sum_{t}\sum_{l}\sum_jp_{i,j,l,t}D_{j,l,t}(\dfrac{1}{2}-\dfrac{1}{p_{max}}p_{i,j,l,t})\nonumber\\
& +c_{j,l,t}D_{j,l,t}(\dfrac{1}{p_{max}}p_{i,j,l,t})-c^{\prime}_{j,l,t}u_{i,j,l,t}.
\end{align}

\begin{theorem}\label{thm:pot}
$\mathcal G_1$ is a concave potential game with potential $\Phi$. Solution of the following optimization problem gives a GNE strategy profile under Assumption~\ref{assum1}:-
\begin{align}\label{eq:opt_pot}
\mathcal{P}_{pot}: \text{maximize }\Phi\nonumber\\
\text{subject to } & (\ref{eq:state}),  (\ref{eq:control}), (\ref{eq:dem_pos}), (\ref{eq:equi})\quad  \forall i=1,2.
\end{align}
The decision variables are $p_{i,j,l,t}$, $u_{i,j,l,t}$  for $i=1,2$. 
\end{theorem}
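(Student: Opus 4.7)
My plan has three steps. \textbf{First}, I would verify the potential identity $\partial \Phi/\partial S_i = \partial \tilde{F}_i/\partial S_i$ by direct differentiation. Differentiating $\tilde{F}_i$ with respect to $p_{i,j,l,t}$ yields
\[
D_{j,l,t}\left(\tfrac{1}{2} - \tfrac{2\,p_{i,j,l,t}}{p_{max}} + \tfrac{p_{k,j,l,t}}{2p_{max}}\right) + \tfrac{c_{j,l,t}\,D_{j,l,t}}{p_{max}}.
\]
In $\Phi$, the bilinear term $p_{i,j,l,t}p_{k,j,l,t}D_{j,l,t}/(2p_{max})$ differentiates to $p_{k,j,l,t}D_{j,l,t}/(2p_{max})$; the own-price quadratic $p_{i,j,l,t}D_{j,l,t}(1/2 - p_{i,j,l,t}/p_{max})$ differentiates to $D_{j,l,t}(1/2 - 2p_{i,j,l,t}/p_{max})$; and the cost term $c_{j,l,t}D_{j,l,t}p_{i,j,l,t}/p_{max}$ differentiates to $c_{j,l,t}D_{j,l,t}/p_{max}$. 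Their sum reproduces $\partial \tilde{F}_i/\partial p_{i,j,l,t}$ exactly. For $u_{i,j,l,t}$, both $\tilde{F}_i$ and $\Phi$ contain only the linear term $-c'_{j,l,t}u_{i,j,l,t}$ tied to player $i$, so the derivatives match trivially.

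\textbf{Second}, I would show $\Phi$ is concave and that $\mathcal{P}_{pot}$ is a convex program. $\Phi$ is affine in every $u_{i,j,l,t}$, and its $p$-dependent part decouples across $(j,l,t)$. For each triple, the Hessian of the relevant quadratic in $(p_{1,j,l,t},p_{2,j,l,t})$ has diagonal entries $-2D_{j,l,t}/p_{max}$ and off-diagonal entries $D_{j,l,t}/(2p_{max})$; its trace is nonpositive and its determinant is $15D_{j,l,t}^2/(4p_{max}^2)\geq 0$, so the Hessian is negative semidefinite. Thus $\Phi$ is jointly concave. Substituting the equality (\ref{eq:dem_pos}) to eliminate $D_{i,j,l,t}$, all remaining constraints in (\ref{eq:state}), (\ref{eq:control}), (\ref{eq:equi}), (\ref{eq:priceconstrain}) are linear in $\{p_{i,j,l,t},u_{i,j,l,t}\}$, so the joint feasible set is a nonempty convex polytope (e.g., $p_{i,j,l,t}=p_{max}$, $u_{i,j,l,t}=0$ is feasible), and a maximizer $(S_1^\star,S_2^\star)$ exists.

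\textbf{Third}, I would show any maximizer of $\mathcal{P}_{pot}$ is a GNE satisfying Assumption~\ref{assum1}. Fix $i$ and pick any $S_i \in \mathcal{S}_i(S_{-i}^\star)$. The cross-player coupling in $\mathcal{P}_{pot}$ enters only through (\ref{eq:dem_pos})--(\ref{eq:equi}), which are exactly the conditions defining $\mathcal{S}_i(S_{-i}^\star)$; the state recursion (\ref{eq:state}), the nonnegativity (\ref{eq:control}), and the box bounds (\ref{eq:priceconstrain}) are indexed by a single player. Hence $(S_i, S_{-i}^\star)$ is feasible for $\mathcal{P}_{pot}$, so optimality gives $\Phi(S_i^\star,S_{-i}^\star) \geq \Phi(S_i, S_{-i}^\star)$, and the potential identity translates this into $\tilde{F}_i(S_i^\star,S_{-i}^\star) \geq \tilde{F}_i(S_i, S_{-i}^\star)$, which is exactly the GNE condition from Definition~\ref{degn:gne}. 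Assumption~\ref{assum1} is automatic: whenever $D_{i,j,l,t}^\star = 0$, the constraint (\ref{eq:dem_pos}) combined with (\ref{eq:equi}) forces $p_{i,j,l,t}^\star = p_{max}/2 + p_{k,j,l,t}^\star/2$.

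The main obstacle is the bookkeeping in step one: one has to confirm that the coefficient $1/(2p_{max})$ in the bilinear term of $\Phi$ is exactly what is needed so that the same expression serves simultaneously as the competitor-price contribution to both $\partial \tilde{F}_1/\partial p_{1,j,l,t}$ and $\partial \tilde{F}_2/\partial p_{2,j,l,t}$. This is possible because the demand model uses a coupling coefficient $1/(2p_{max})$ that is symmetric in $(i,k)$, so a single symmetric bilinear term in $\Phi$ suffices. Once the potential identity is established, the concavity argument and the GNE reduction are standard consequences of joint convexity of the feasible region.
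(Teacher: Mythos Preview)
Your proposal is correct and follows essentially the same route as the paper: verify the potential identity by matching partial derivatives, check concavity of $\Phi$ via the $2\times 2$ price Hessian, and then argue that a maximizer of $\mathcal{P}_{pot}$ satisfies the GNE inequalities through the potential property. You simply fill in computational details (the explicit derivative match, the determinant $15D_{j,l,t}^2/(4p_{max}^2)$, and the unilateral-deviation feasibility argument) that the paper leaves implicit.
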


\begin{proof}
Observe that for all $i$,
\begin{align}\label{eq:potential}
\dfrac{\partial \Phi}{\partial p_{i,j,l,t}}=\dfrac{\partial \tilde{F}_i}{\partial p_{i,j,l,t}}\nonumber\\
\dfrac{\partial \Phi}{\partial p_{i,j,l,t}}=\dfrac{\partial \tilde{F}_i}{\partial u_{i,j,l,t}}.
\end{align}
This ensures that $\Phi$ is a potential function of the game by Definition~\ref{defn:potential}. 
 
 Also observe that $\Phi$ has a bilinear term. However, $\Phi$ is still concave because $\dfrac{\partial^2 \Phi}{\partial p_{i,j,l,t}\partial p_{k,j,l,t}}=\dfrac{D_{j,l,t}}{2p_{max}}$ and 
 $\dfrac{\partial^2 \Phi}{\partial p_{i,j,l,t}^2}=-\dfrac{2D_{j,l,t}}{p_{max}}$. Thus, the Hessian is negative definite. 

Further, when we restrict our strategy profile to follow Assumption~\ref{assum1}, the strategy space of each RSP becomes equal to that of the constraints in $\mathcal{P}_{pot}$. Hence, the solution of $\mathcal{P}_{pot}$ is a GNE under Assumption~\ref{assum1}.
\end{proof}

\begin{remark}
Since the potential function is strictly concave and the strategy space is closed, compact, and convex, thus, the solution of $\mathcal{P}_{pot}$ is unique. The GNE obtained from the solution of the $\mathcal{P}_{pot}$ is also known as Normalized Nash equilibrium \cite{rosen} or variational-GNE \cite{lygeros}.  
\end{remark}

\begin{remark}
If there is a GNE strategy profile which satisfies Assumption~\ref{assum1}, that will also be the solution of $\mathcal{P}_{pot}$. If GNE is unique and satisfy Assumption~\ref{assum1}, then GNE and the solution of the optimization problem in $\mathcal{P}_{pot}$ are the same. Also note that the solution of $\mathcal{P}_{pot}$ is GNE only under Assumption~\ref{assum1}, i.e., when a RSP $i$ has $0$ demand, its price must be $\dfrac{p_{max}}{2}+\dfrac{p_{k,j,l,t}}{2}$, where $k\in \{1,2\}, k\neq i$. If we relax this assumption, the solution may not be a GNE. 
\end{remark}
\begin{remark}
Note that $\Phi$ is not equal to the sum of the $F_i$'s. Hence, an equilibrium to the potential game is not the same as the outcome obtained when the two RSPs collude  to maximize the sum of their profits. 
\end{remark}


The above result depicts a GNE when each RSP chooses prices under Assumption~\ref{assum1}. In the GNE  obtained by the solution of $\mathcal{P}_{pot}$, the prices for both the RSPs may be different. The next result identifies the condition in which the solution of optimization problem in $\mathcal{P}_{pot}$ gives a symmetric GNE.

\begin{theorem}\label{thm:sym}
If $C_1=C_2$ and all the vehicles are at the same location initially for all the RSPs, then,  if the optimal solution in $\mathcal{P}_{pot}$ is such that for all source-destination pairs $(j,l)$ either all the RSPs have positive demand or zero demand at all time, i.e., either $D_{i,j,l,t}>0, D_{k,j,l,t}>0$ or $D_{i,j,l,t}=D_{k,j,l,t}=0$,  then it must be that $p_{i,j,l,t}=p_{k,j,l,t}$ and  $D_{i,j,l,t}=D_{k,j,l,t}$.
\end{theorem}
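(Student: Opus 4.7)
My plan is to exploit the symmetry of $\mathcal{P}_{pot}$ under swapping the two RSP labels, combined with the strict concavity of $\Phi$ established in Theorem~\ref{thm:pot}. The hypotheses $C_1=C_2$ and identical initial vehicle locations are exactly what make the problem data invariant under the involution $(S_1,S_2)\mapsto(S_2,S_1)$: the potential function is symmetric (the bilinear cross term $p_{1,j,l,t}p_{2,j,l,t}D_{j,l,t}/(2p_{max})$ is manifestly symmetric in the two indices, and the remaining terms already carry an explicit $\sum_i$), and every constraint of $\mathcal{P}_{pot}$ is symmetric because the costs $c_{j,l,t}$ and $c'_{j,l,t}$, the travel tensor $A^t$, the total demand $D_{j,l,t}$, the price bound $p_{max}$, the initial state $x_{i,j,0}$, and the capacity $C_i$ carry no RSP-specific dependence. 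Consequently, if $(S_1^*,S_2^*)$ is an optimum of $\mathcal{P}_{pot}$, then so is the swapped profile $(S_2^*,S_1^*)$ with the same potential value.

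Next I would form the profile in which both RSPs use the averaged strategy $\bar S=(S_1^*+S_2^*)/2$. By convexity of the feasible set this midpoint is feasible, and under the hypothesis that demands are jointly positive or jointly zero at every $(j,l)$ and $t$, both the nonnegativity constraints~(\ref{eq:control}) and the demand equation~(\ref{eq:dem_pos}) are preserved at the midpoint without case splits. Concavity of $\Phi$ combined with the symmetry above gives $\Phi(\bar S,\bar S)\geq \frac{1}{2}[\Phi(S_1^*,S_2^*)+\Phi(S_2^*,S_1^*)]=\Phi(S_1^*,S_2^*)$. Strict concavity of $\Phi$ in the price variables whenever $D_{j,l,t}>0$, which follows from the negative-definite Hessian already computed in the proof of Theorem~\ref{thm:pot}, turns this into a strict inequality as soon as $p_{1,j,l,t}^*\neq p_{2,j,l,t}^*$ at some such $(j,l,t)$, contradicting the optimality of $(S_1^*,S_2^*)$. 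Hence $p_{1,j,l,t}^*=p_{2,j,l,t}^*$ at every $(j,l,t)$ with $D_{j,l,t}>0$, and the demand formula~(\ref{eq:dem_pos}) then yields $D_{1,j,l,t}^*=D_{2,j,l,t}^*$; the remaining case $D_{j,l,t}=0$ forces $D_{1,j,l,t}^*=D_{2,j,l,t}^*=0$ automatically.

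The main obstacle, which the extra hypothesis sidesteps, is an asymmetric optimum in which only one RSP serves a given $(j,l)$ at some $t$, i.e. $D_{i,j,l,t}=0$ while $D_{k,j,l,t}>0$. In that regime Assumption~\ref{assum1} pins the zero-demand price to $p_{max}/2+p_{k,j,l,t}/2$, so the two RSPs' prices become coupled through an equality constraint and the midpoint-and-strict-concavity argument requires a delicate case analysis to verify that the demand nonnegativity~(\ref{eq:equi}) and state recursion~(\ref{eq:state}) are preserved with the correct slack. The jointly-positive-or-jointly-zero hypothesis eliminates this configuration and lets the averaging argument close cleanly.
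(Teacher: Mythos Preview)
The paper states Theorem~\ref{thm:sym} without proof, offering only the informal remark that under symmetric data the two RSPs ``should behave similarly.'' Your symmetry-plus-strict-concavity argument is precisely the natural way to make that intuition rigorous, and the core steps are correct: the involution $(S_1,S_2)\mapsto(S_2,S_1)$ preserves both $\Phi$ and the feasible set of $\mathcal{P}_{pot}$ once $C_1=C_2$ and $x_{1,j,0}=x_{2,j,0}$; the feasible set is convex; and the negative-definite price Hessian from Theorem~\ref{thm:pot} forces the midpoint to strictly improve unless the two price vectors coincide at every $(j,l,t)$ with $D_{j,l,t}>0$.

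One comment on your final paragraph: the caution about the asymmetric configuration $D_{i,j,l,t}=0$, $D_{k,j,l,t}>0$ is not actually needed. The feasible region of $\mathcal{P}_{pot}$ is carved out entirely by linear equalities and inequalities (the state recursion~(\ref{eq:state}), the nonnegativity constraints~(\ref{eq:control}) and~(\ref{eq:equi}), the affine demand definition~(\ref{eq:dem_pos}), and the box~(\ref{eq:priceconstrain})), so the midpoint of $(S_1^*,S_2^*)$ and $(S_2^*,S_1^*)$ is automatically feasible regardless of which inequalities are active; there is no ``delicate case analysis'' to perform on slack. Furthermore, at any such asymmetric point one necessarily has $p_{i,j,l,t}^*\ne p_{k,j,l,t}^*$ with $D_{j,l,t}>0$, so your strict-concavity contradiction still fires. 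Your argument therefore establishes more than the theorem claims: under the symmetric data, \emph{every} optimizer of $\mathcal{P}_{pot}$ is symmetric in prices and hence in demands, so the hypothesis in the theorem statement is in fact automatically satisfied and could be dropped.
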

The above result indicates that when both the RSPs have the same number of vehicles starting from the same location then there exists a symmetric GNE where the strategies are identical under the condition that no source-destination pair is served by one of the RSPs alone. The result indicates that if all the source-destination pairs are served by the RSPs, in the GNE they can not offer different prices. Further, the total number of customers served is equally divided between the two RSPs. Intuitively, under symmetric conditions where both the RSPs are identical, in the equilibrium both of them should behave similarly. 


\section{Monopoly Scenario}
We, now, consider the scenario where there is only one (monopoly) RSP. We formulate the monopolist's optimization problem and the strategy of the monopoly. Subsequently, we characterize a condition under which monopolist's strategy profile will be the same as the equilibrium profile in a duopolist setting. 

Since there is only one RSP,  we remove $i\in \{1,2\}$ from the subscript of $p_{i,j,l,t}$ and  simply denote the price from source $j$ to destination $l$ at time $t$ as $p_{j,l,t}$. 

The demand model is now given by the following 
\begin{align}\label{eq:mono_demand}
\hat{D}_{j,l,t}=D_{j,l,t}(1-\dfrac{1}{p_{max}}p_{j,l,t}).
\end{align}
Such linear demand response model has been considered in the literature \cite{arnob_tnse,altman2010}. Note that the demand expression is similar to  (\ref{eq:totdem}) and it is equivalent to (\ref{eq:totdem}) when $p_{i,j,l,t}=p_{k,j,l,t}$ and $D_{i,j,l,t}, D_{k,j,l,t}>0$. Here, we assume that $p_{j,l,t}\leq p_{max}$, thus, $\hat{D}_{j,l,t}\geq 0$. 

We denote the number of free vehicles routed from node $j$ to node $l$ during time slot $t$ as $u_{j,l,t}$. Let the number of available vehicles at location $j$ be $x_{j,t}$. 

Then, similar to (\ref{eq:state}), the number of available vehicles at a location $j$ at time $t$ is given by, 
\begin{align}\label{eq:state_mono}
x_{j,t}=x_{j,t-1}+\sum_{\tau<t}\sum_{l}\hat{D}_{l,j,\tau}A^{\tau}_{l,j,t-\tau}+\nonumber\\
\sum_{\tau<t}\sum_{l}u_{l,j,\tau}A^{\tau}_{l,j,t-\tau}-\sum_{l}\hat{D}_{j,l,t}-\sum_{l}u_{j,l,t}.
\end{align}
Here, (\ref{eq:state_mono}) represents the vehicle flow model when there is only one RSP. $x_{i,j,0}$ is the initial location of the vehicles of the monopoly RSP and is assumed to be known. 

The routing of vehicles also must satisfy the following two constraints:
\begin{eqnarray}
 x_{j,t}\geq 0, \quad u_{j,l,t}\geq 0.\label{eq:pos_mono}
\end{eqnarray}
 The constraints in (\ref{eq:pos_mono}) represents the fact that number of available vehicles and the free vehicles that are routed can not be negative. Again, (\ref{eq:state_mono}) and (\ref{eq:pos_mono}) together imply that the total number of vehicles is less than the capacity of the monopoly RSP. 

Thus, the monopoly RSP solves the following optimization problem:
\begin{eqnarray}
\mathcal{P}_{m}: & \text{maximize } \sum_{j}\sum_{t}p_{j,l,t}D_{j,l,t}(1-\dfrac{1}{p_{max}}p_{j,l,t})\nonumber\\
& -c_{j,l,t}D_{j,l,t}(1-\dfrac{1}{p_{max}}p_{j,l,t})-c^{\prime}_{j,l,t}u_{j,l,t}\label{eq:mono_opti}\\
& \text{subject to } 
(\ref{eq:state_mono}), (\ref{eq:pos_mono}), \quad 0\leq p_{j,l,t}\leq p_{max}\nonumber.
\end{eqnarray}
The above optimization problem is convex. We represent optimal strategy of the RSP as $(p^{*}_{j,l,t},u^{*}_{j,l,t})$. We can directly compare the solution of the monopoly setting with the duopoly setting for example from the solution of the potential game $\mathcal{P}_{pot}$ (cf.~(\ref{eq:opt_pot})). 

\subsection{Equivalence between Monopoly and Duopoly}  
We now describe a condition under which the monopoly price strategy and duopoly price strategy coincide at all the origin-destination pairs for all time. Thus, competition will not have any impact on the pricing strategy.  Note that a social planner or regulator may want to maximize the users' surplus which increases as the prices decrease. Thus, the social planner or regulator may want to regulate the market in avoid conditions where competition does not impact the price. 

Suppose that the total vehicles are partitioned into two sets $\mathcal{N}_1$ and $\mathcal{N}_2$. 
%
Let $x^{i}_{j,t}$ be the available vehicles at location $j$ at time $t$ that belong to $\mathcal{N}_i$, for $i=1,2$. Further, let $\hat{D}^{i}_{j,l,t}$ be the demand which will be satisfied by the vehicles in set $\mathcal{N}_i$ and let $u^{i}_{j,l,t}$ be the number of free vehicles among the $x^{i}_{j,t}$ vehicles which are routed from location $j$ to $l$. Then, $x^{i}_{j,t}$  changes as
\begin{align}\label{eq:state_part}
x^{i}_{j,t}=x^{i}_{j,t-1}+\sum_{\tau<t}\sum_{l}\hat{D}^{i}_{l,j,\tau}A^{\tau}_{l,j,t-\tau}+\nonumber\\
\sum_{\tau<t}\sum_{l}u^{i}_{l,j,\tau}A^{\tau}_{l,j,t-\tau}-\sum_{l}\hat{D}^{i}_{j,l,t}-\sum_{l}u^{i}_{j,l,t}
\end{align}
where
\begin{align}
\sum_{j}x^{i}_{j,t}\leq |\mathcal{N}_i|
\end{align}
and $|\mathcal{N}_i|$ is the cardinality of $\mathcal{N}_i$.  Here, $x^{i}_{j,0}$ is the initial locations of the vehicles in the set $\mathcal{N}_i$ at location $j$. 

Now, consider the following optimization problem 
\begin{align}\label{eq:opti_part}
\mathcal{P}_{m,1}:\text{maximize } & \sum_{j}\sum_{t}p_{j,l,t}D_{j,l,t}\left(1-\dfrac{1}{p_{max}}p_{j,l,t}\right)\nonumber\\
& -c_{j,l,t}D_{j,l,t}\left(1-\dfrac{1}{p_{max}}p_{j,l,t}\right)-\nonumber\\ & c^{\prime}_{j,l,t}(u^{1}_{j,l,t}+u^{2}_{j,l,t})\nonumber\\
\text{subject to } 
& \hat{D}^{1}_{j,l,t}+\hat{D}^{2}_{j,l,t}=D_{j,l,t}\left(1-\dfrac{1}{p_{max}}p_{j,l,t}\right)\nonumber\\
& \hat{D}^{i}_{j,l,t}\geq 0\quad  \forall i\nonumber\\
& (\ref{eq:state_part}), 0\leq p_{j,l,t}\leq p_{max},\quad u^{i}_{j,l,t}\geq 0.\nonumber
\end{align}
This is essentially the same as the monopolist optimization problem presented earlier in $\mathcal{P}_m$, except here we track which vehicle is in each of the two partitions. The next result characterizes when the monopoly and the duopoly price will be the same. We denote the optimal solution to $\mathcal{P}_{m,1}$ as $(p^{*}_{j,l,t},u^{*,1}_{j,l,t},u^{*,2}_{j,l,t})$. Likewise, $u^{*,1}_{i,j,l,t}+u^{*,2}_{j,l,t}$ gives the optimal solution $u^{*}_{j,l,t}$ of the optimization problem $\mathcal{P}_m$ (cf. (\ref{eq:mono_opti})).


\begin{theorem}
Suppose that the solution of the optimization problem in $\mathcal{P}_{m,1}$ is such that $\hat{D}^{i}_{j,l,t}\hat{D}^{k}_{j,l,t}=0$ $\forall (j,l,t)$. Then, the optimal solution is also a GNE profile of the duopoly scenario when $x^{i}_{j,0}=x_{i,j,0}$, and $|\mathcal{N}_i|=C_i$, for $i=1,2$ with the GNE strategy profile $p_{i,j,l,t}=p_{j,l,t}$ if $\hat{D}^{i}_{j,l,t}>0$, $p_{i,j,l,t}=p_{max}$ if $\hat{D}^{i}_{j,l,t}=0$, $u^{i}_{j,l,t}=u_{i,j,l,t}$ for $i=1,2$.

Under the above GNE strategy profile, $D_{i,j,l,t}=\hat{D}^{i}_{j,l,t}$ for $i=1,2$.
\end{theorem}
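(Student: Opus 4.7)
My plan is to verify the three components of a GNE claim in sequence: feasibility of the proposed profile, matching of duopoly demands to the monopoly partition demands, and unilateral optimality.

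\textbf{Feasibility and demand matching.} I would first check that the proposed profile is feasible in $\mathcal{P}_1$ for each RSP. Prices lie in $[0,p_{max}]$ by construction; with $u_{i,j,l,t}=u^{i}_{j,l,t}$ and (yet to be established) $D_{i,j,l,t}=\hat{D}^{i}_{j,l,t}$, the duopoly state update (\ref{eq:state}) reduces term-by-term to the monopoly partition update (\ref{eq:state_part}), yielding $x_{i,j,t}=x^{i}_{j,t}\geq 0$; initial conditions coincide by hypothesis and $|\mathcal{N}_i|=C_i$ gives the capacity bound. I would then substitute the proposed prices into (\ref{eq:dem}). For $\hat{D}^{i}_{j,l,t}>0$ we have $\hat{D}^{k}_{j,l,t}=0$ by assumption, $p_{i,j,l,t}=p_{j,l,t}$ and $p_{k,j,l,t}=p_{max}$, so (\ref{eq:dem}) collapses to the monopoly demand $D_{j,l,t}(1-p_{j,l,t}/p_{max})=\hat{D}_{j,l,t}=\hat{D}^{i}_{j,l,t}$. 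For $\hat{D}^{i}_{j,l,t}=0$, $p_{i,j,l,t}=p_{max}$ drives the raw demand expression nonpositive, so the $\max$ with $0$ yields $D_{i,j,l,t}=0$ regardless of $p_{k,j,l,t}$.

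\textbf{Optimality via inherited KKT.} Given $S_{-i}^*$, RSP $i$'s best-response program is convex (same reasoning as for $\mathcal{P}_1$), so I would certify optimality of $S_i^*$ by constructing KKT multipliers inherited from $\mathcal{P}_{m,1}$. The shadow prices $\lambda^{i}_{j,t}$ on the partition-$i$ state equation transfer directly to RSP $i$'s state multipliers, and $\alpha^{i}_{j,l,t}\geq 0$ on $\hat{D}^{i}_{j,l,t}\geq 0$ transfers to the nonnegativity multiplier on $D_{i,j,l,t}\geq 0$. At triples with $\hat{D}^{i}_{j,l,t}>0$ the monopoly first-order condition for $p_{j,l,t}$ coincides with the duopoly first-order condition for $p_{i,j,l,t}$, because setting $p_{k,j,l,t}=p_{max}$ makes the partial derivative of RSP $i$'s demand with respect to its own price equal to the monopoly slope $-D_{j,l,t}/p_{max}$. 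At triples with $\hat{D}^{i}_{j,l,t}=0$ I would leverage the pointwise bound
\[
D_{j,l,t}\!\left(\tfrac{1}{2}-\tfrac{p_i}{p_{max}}+\tfrac{p_{k,j,l,t}}{2p_{max}}\right)\;\leq\;D_{j,l,t}\!\left(1-\tfrac{p_i}{p_{max}}\right),
\]
strict whenever $p_{k,j,l,t}<p_{max}$, to argue that any hypothetical revenue RSP $i$ could earn by entering such a triple is dominated by the corresponding monopoly revenue, which is in turn dominated by the shadow-price value of partition $i$'s vehicles from $\mathcal{P}_{m,1}$.

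\textbf{Main obstacle.} The substantive difficulty is this last step: in the duopoly RSP $i$ has the extra flexibility of pricing independently at triples that are exclusive to RSP $k$, a deviation not representable in $\mathcal{P}_{m,1}$ where prices are shared across partitions. This is where the hypothesis $\hat{D}^{i}_{j,l,t}\hat{D}^{k}_{j,l,t}=0$ does the crucial work: it ensures that complementary slackness for $\hat{D}^{i}\geq 0$ in $\mathcal{P}_{m,1}$ provides a nontrivial opportunity-cost bound $\alpha^{i}_{j,l,t}\geq 0$, which combined with the pointwise demand inequality above suffices to close the KKT verification and rule out the entry deviation.
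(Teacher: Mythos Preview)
The paper does not give a formal proof of this theorem; it offers only a two-paragraph intuitive discussion (setting $p_{k}=p_{max}$ collapses RSP $i$'s demand to the monopoly expression, and the exclusivity hypothesis partitions the market). Your proposal is therefore considerably more detailed than what the paper provides, and the overall architecture---feasibility, demand matching, then a KKT-based optimality check---is the natural way to make the paper's sketch rigorous.

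That said, there is a concrete error in your feasibility step. You claim the profile is feasible in $\mathcal{P}_1$, but it is not: at any triple with $\hat D^{i}_{j,l,t}=0$ and $\hat D^{k}_{j,l,t}>0$ the construction sets $p_{i,j,l,t}=p_{max}$ while $p_{k,j,l,t}=p^{*}_{j,l,t}<p_{max}$, so the \emph{linear} demand in (\ref{eq:dem_pos}) is strictly negative, violating (\ref{eq:equi}). The paper flags exactly this point when it remarks that the profile violates Assumption~\ref{assum1}. The fix is to check feasibility in the original game $\mathcal{P}$ (where the $\max$ with $0$ absorbs the negativity), and to invoke $\mathcal{P}_1$ only for each RSP's \emph{best-response} problem, noting that in $\mathcal{P}$ the choices $p_{i}=p_{max}$ and $p_{i}=\tfrac{p_{max}}{2}+\tfrac{p_{k}}{2}$ yield the same payoff when demand is zero.

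Your treatment of the ``entry'' obstacle is the right instinct, but the closing argument is under-specified. The multiplier $\alpha^{i}_{j,l,t}\ge 0$ from $\mathcal{P}_{m,1}$ only compares partition $i$'s and partition $k$'s net vehicle shadow prices \emph{at the shared monopoly price} $p^{*}_{j,l,t}$; it does not by itself bound RSP $i$'s marginal incentive at an arbitrary entry price $p_i$, which is what the duopoly stationarity condition requires. Concretely, writing $\nu^{i}$ for partition $i$'s net trip opportunity cost, $\mathcal{P}_{m,1}$ gives $\nu^{i}\le p_{max}+c-2p^{*}$, whereas the duopoly no-entry KKT at $D_i=0$ needs $\nu^{i}\le c-\tfrac{p_{max}}{2}-\tfrac{p^{*}}{2}$, which is strictly stronger unless $p^{*}=p_{max}$. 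The pointwise demand inequality you wrote is correct but does not bridge this gap on its own; you need an additional argument tying the monopoly's rejection of partition-$i$ service at \emph{every} candidate price (not just $p^{*}$) to RSP $i$'s best-response problem, or else a direct value-comparison argument that embeds any duopoly deviation for RSP $i$ into a feasible $\mathcal{P}_{m,1}$ perturbation.
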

Note that in order to have the monopoly price equal to the duopoly price in a GNE, the price must be set at $p_{i,j,l,t}=p_{max}$ when $D_{i,j,l,t}=0$ which violates Assumption~\ref{assum1}. Hence, the above solution will not be equal to the GNE given by the solution of $\mathcal{P}_{pot}$ (cf.~(\ref{eq:opt_pot})). 

Intuitively, when $p_{i,j,l,t}=p_{max}$, the demand for RSP $k$, $k\neq i, k\in \{1,2\}$ becomes equal to the expression of $\hat{D}_{j,l,t}$ with $p_{k,j,l,t}=p_{j,l,t}$. Hence,  if an optimal solution in $\mathcal{P}_{m,1}$ is such that $\hat{D}^{i}_{j,l,t}\hat{D}^{k}_{j,l,t}=0$, then in the duopoly setting we have a GNE which is same as the monopoly solution. 

Note that when $\hat{D}^{i}_{j,l,t}\hat{D}^{k}_{j,l,t}=0$ for every source destination pair $(j,l)$, the monopoly RSP also partitions the total demand into two mutually exclusive sets where each set of demand is served by only one set of vehicles. Thus, an optimal solution for the monopoly is such that each set of demand can be served by the set of vehicles $\mathcal{N}_i$ $i=1,2$ with the same cardinality as the capacity of RSP $i$ and same initial locations as of the vehicles of RSP $i$. It follows that  the optimal solution also induces a GNE in the duopoly scenario such that in the GNE strategy profile  $p_{i,j,l,t}=p_{max}$ when $D_{i,j,l,t}=0$. 

Note that if the condition $\hat{D}^{i}_{j,l,t}\hat{D}^{k}_{j,l,t}=0$  for all pairs $(j,l)$ is not satisfied in any optimal solution of $\mathcal{P}_{m,1}$, then it {\em may not} induce the GNE even when in the GNE strategy profile $p_{i,j,l,t}=p_{max}$ when $D_{i,j,l,t}=0$.  

\section{Extension: Non-Deterministic Demand}\label{sec:stoc_demand}
Though RSPs in general estimate the demand between any two locations, our analysis will go through when the RSPs only know the distribution of the demand between any two nodes $j$ and $l$, $D_{j,l,t}$. %

A scenario denotes the joint demand across the time-periods $t=1,\ldots, T$. Suppose that the scenario $m$ where the joint demand for $t=1,\ldots,T$, $D_{j,l,t}=D_{j,l,t}^m$ occurs with probability (w.p.) $q_m$. Let $D_{i,j,l,t}^m$ be the demand from location $j$ to $l$ for RSP $i$ for the $m$-th scenario which is obtained using (\ref{eq:dem}) where $D_{j,l,t}^m$ replaces $D_{j,l,t}$.  We can formulate a scenario-based stochastic programming problem for each RSP where each RSP wants to maximize its expected profit which we describe in the following. 

 The objective function for each RSP is now $F^{\prime}_i=\sum_{t}\sum_{j}\sum_{l}(\sum_{m}q_mp_{i,j,l,t}D_{i,j,,l,t}^m-c_{j,l,t}D_{i,j,l,t}^m)-c^{\prime}_{j,l,t}u_{i,j,l,t}$). The constraint in (\ref{eq:state})  is replaced with the following--
\begin{eqnarray}
x_{i,j,t}^m=x_{i,j,t-1}^m+\sum_{\tau<t}\sum_{l}D^m_{i,l,j,\tau}A^{\tau}_{l,j,t-\tau}+\nonumber\\
\sum_{\tau<t}\sum_{l}u_{i,l,j,\tau}A^{\tau}_{l,j,t-\tau}-\sum_{l}D^m_{i,j,l,t}-\sum_{l}u_{i,j,l,t}\quad \forall m.\label{eq:state_stoc}
\end{eqnarray}
$x_{i,j,t}^m$ is the number of vehicles of RSP $i$ at the location $j$ for the $m$-th scenario. $x_{i,j,0}^m$ is the initial location of vehicles which is the same for all the scenarios. 

The constraint in (\ref{eq:control}) is replaced by
\begin{equation}\label{eq:control_stoc}
x_{i,j,t}^m\geq 0, \quad u_{i,j,l,t}\geq 0.
\end{equation}
Hence, the modified optimization problem for RSP $i$ is now
\begin{align}
\mathcal{P}_{stoc}: & \text{ maximize } F^{\prime}_i\nonumber\\
& \text{subject to } (\ref{eq:state_stoc}), (\ref{eq:control_stoc}), (\ref{eq:priceconstrain}).\nonumber
\end{align}
Each RSP $i$ takes its decision such that the constraints for all the scenarios are satisfied which can be quite restrictive. One can also consider an approach where constraints for at least $M_1$ number of scenarios are satisfied where the probability that at least one of the $M_1$ scenarios occurs is large. Since the constraints are affine and the structure of the demand function remains the same, one can define a potential game by representing the demand in an equivalent form as in (\ref{eq:dem_pos}) and (\ref{eq:equi}) for each scenario and then restricting the problem for Assumption~\ref{assum1}. The detailed analysis is omitted here. 


\section{Numerical Simulations}\label{sec:numerical}
\subsection{Set Up}
In this section, we numerically evaluate the pricing strategy when there are two RSPs and compare with the monopoly price. Towards this end, we consider a network consisting of two cluster of nodes (Fig.~\ref{fig:cluster}). Each cluster consists of $n=10$ nodes. Nodes in each cluster are connected with each other. The time unit taken from any node of the cluster to any other node within the cluster is $1$. There also exists an edge from each node of the cluster to every other node of the other cluster. However, the time required to travel between clusters is $2$. 

\begin{figure}
\includegraphics[width=80mm, height=30mm]{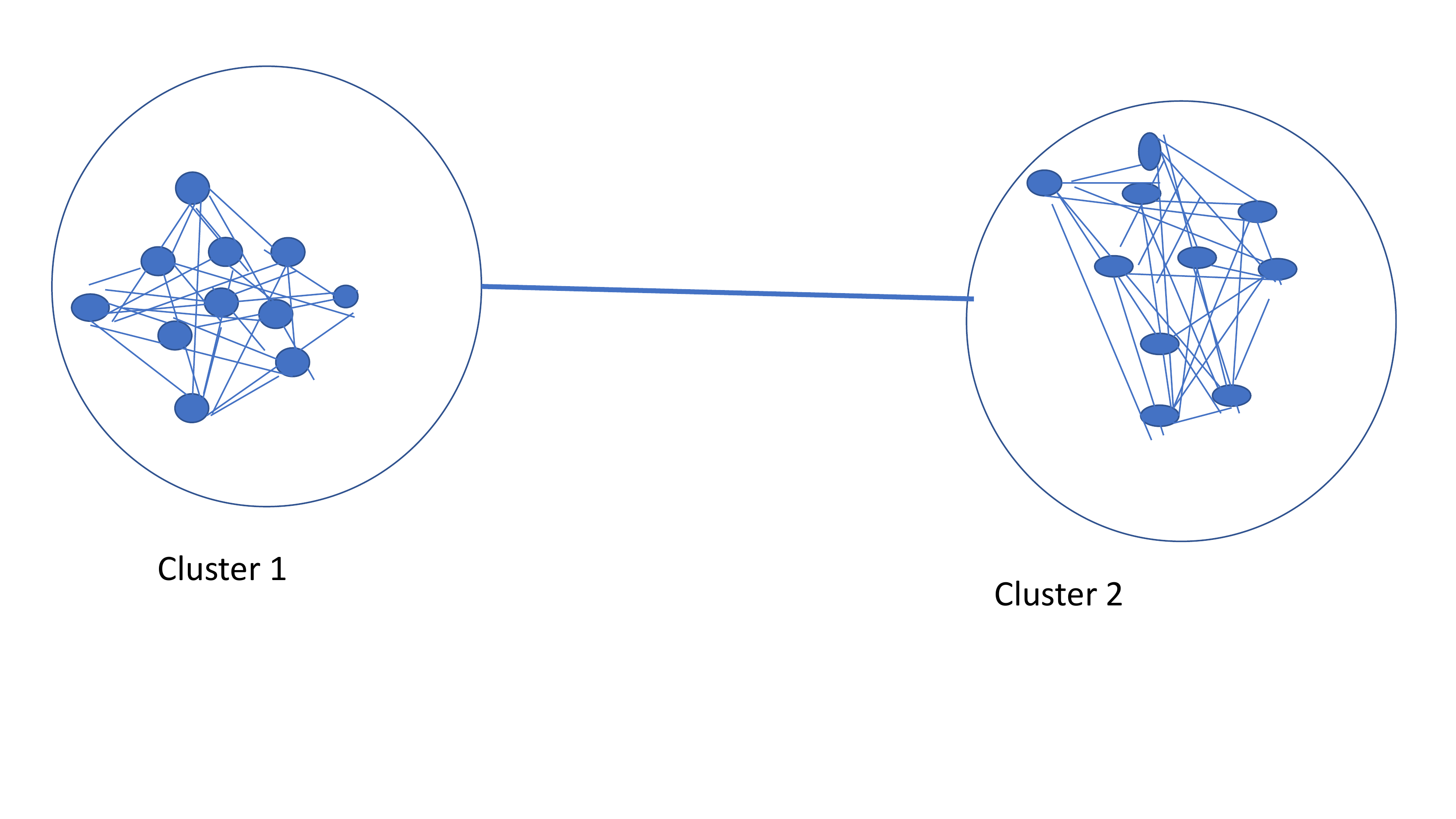}
\vspace{-0.2in}
\caption{Two clusters: each cluster has $10$ nodes. Every node is connected to every other node in the graph. However, the time taken to go from one node within a cluster to a node of different cluster is $2$ units compared to time taken to go between any two nodes within a cluster which is $1$ unit.}
\label{fig:cluster}
\vspace{-0.2in}
\end{figure}
We consider $T=4$. Total outgoing demand from each node is the same, $D_{m,t}$ at time $t$. We consider that the demand towards every node within the cluster as $(1-q)D_{m,t}/(n-1)$ and the demand towards each node of the other cluster is $qD_{m,t}/n$ where $q\in (0,0.5)$. 
Note that when $q=0.5$, the demand becomes balanced as the demand towards any node from a given node is the same. When $q$ is near $0$, there is little outgoing demand towards the nodes of the other cluster. Note that such a demand can be prevalent in a city consisting of two highly populated suburban areas, and the demand within each area can be considered to be uniformly distributed where as the demand from one sub-urban area to another may vary over time.

We assume that demand is $D_{m,1}=40, D_{m,2}=20, D_{m,3}=40, D_{m,4}=40$. Thus, the demand oscillates between a high value and a low value. We assume that each RSP has the same capacity $C_i$. Similar to the demand, we consider that  at each node in the cluster $i$, initially RSP $i$ has $(1-q)C_i/n$  vehicles and at every node of the other cluster the number of vehicles is $qC_i/n$. Again as $q$ increases, the initial positions of the vehicles become more uniformly distributed over all the locations. The cost $c_{j,l,t}=0.1$ ($c^{\prime}_{j,l,t}=0.05$) for all origin-destination pairs within the cluster and $c_{j,l,t}=0.2$ ($c^{\prime}_{j,l,t}=0.1$) for all origin-destination pairs which originate from one node in a cluster and ends in a different cluster. We consider $p_{max}=1$. We compare the prices with the monopoly setting where we assume that a single RSP serves the customers with combined capacity of the RSPs (i.e, $C_1+C_2$) and with the same initial locations as that of the vehicles of RSP $1$ and RSP $2$. 


\begin{figure}
\includegraphics[width=0.7\linewidth]{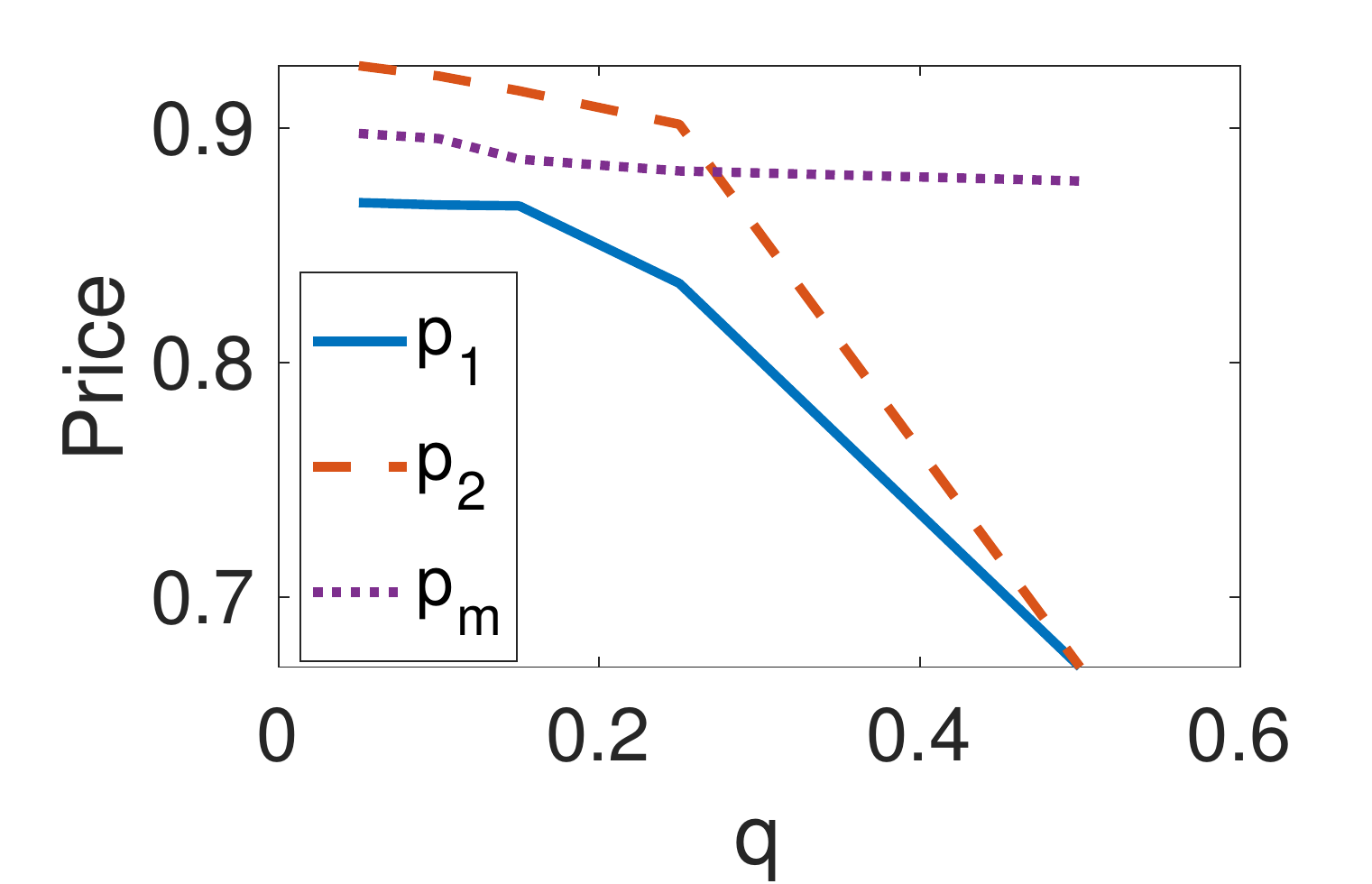}
\vspace{-0.1in}
\caption{Variation of Prices at time periods $2$, and $4$ (low demand regime) between any two nodes within cluster $1$ with $q$. $p_i$ denotes the prices of RSP $i$, $i=1,2$. $p_m$ denotes the monopoly price. }
\label{fig:q}
\vspace{-0.2in}
\end{figure}

\begin{figure}
\includegraphics[width=0.7\linewidth]{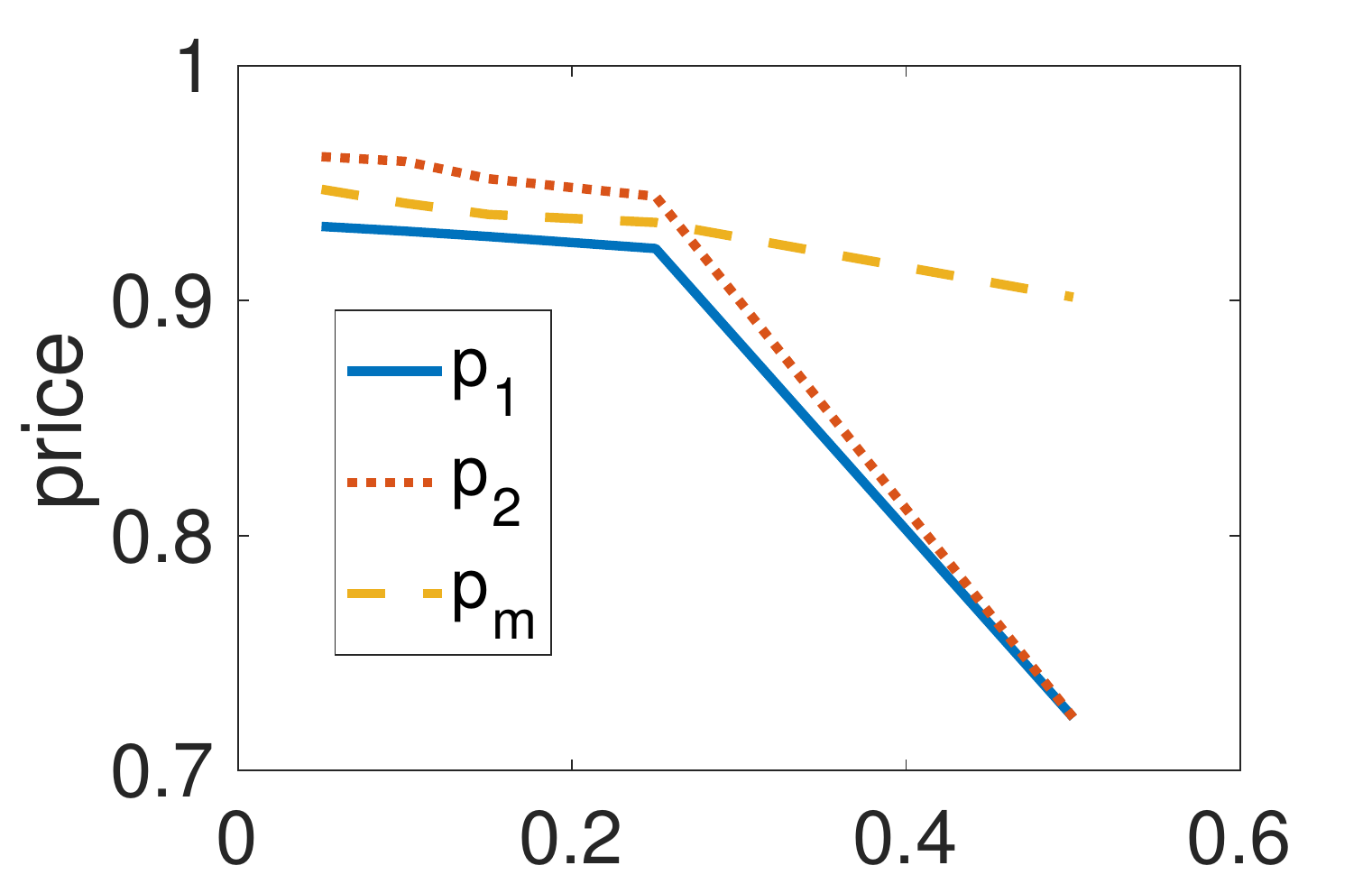}
\vspace{-0.1in}
\caption{Variation of Prices at time periods $1$ and $3$ (high demand regime) between any two nodes within cluster $1$ with $q$. $p_i$ denotes the prices of RSP $i$, $i=1,2$. $p_m$ denotes the monopoly price. }
\label{fig:q_diff}
\vspace{-0.2in}
\end{figure}

\begin{figure}
\includegraphics[width=0.6\linewidth]{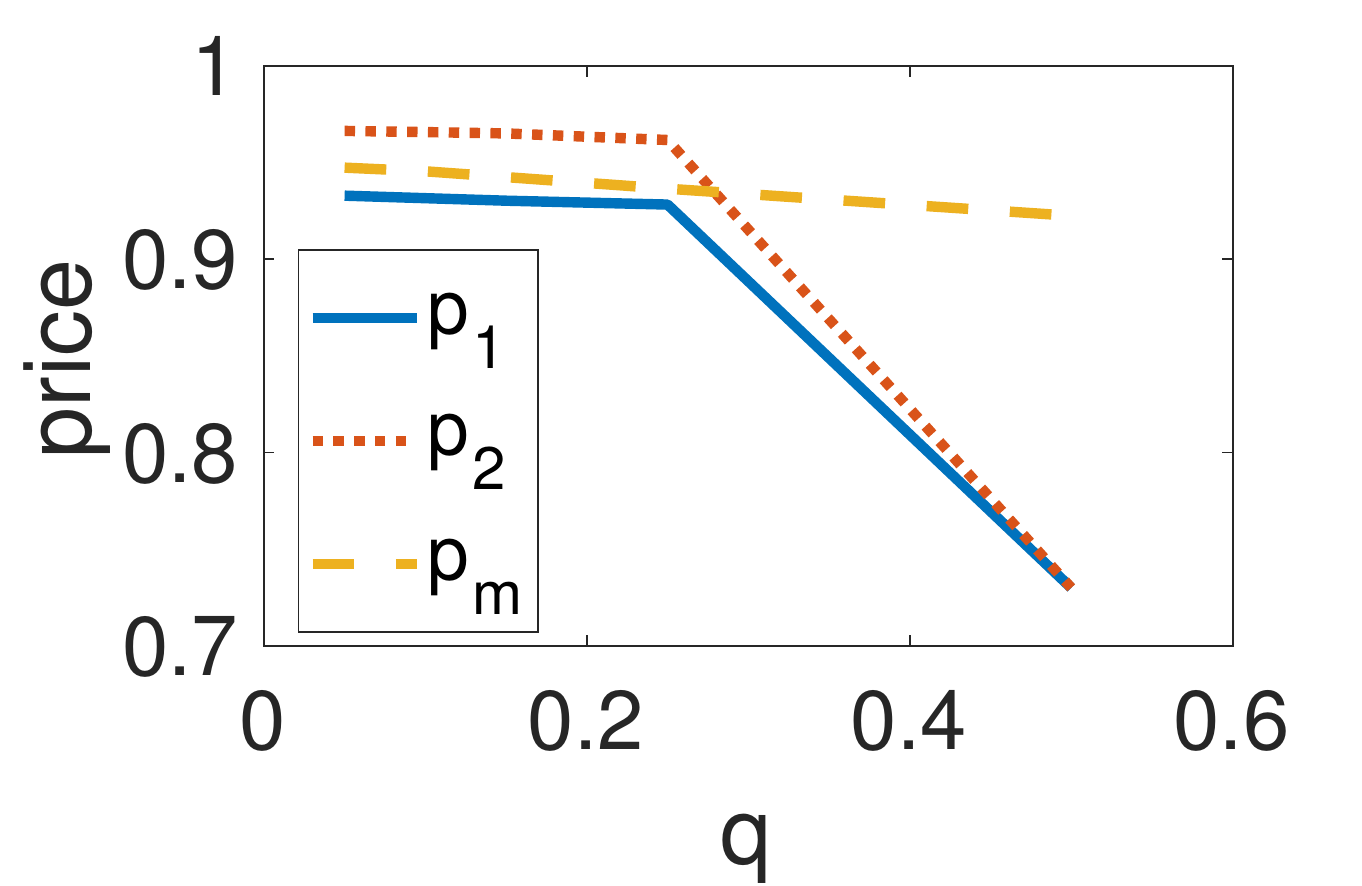}
\vspace{-0.1in}
\caption{Variation of Prices at time periods $1$ and $3$ (low demand regime) between any node from one cluster in cluster $1$ to a node at cluster $2$ with $q$. $p_i$ denotes the prices of RSP $i$, $i=1,2$. $p_m$ denotes the monopoly price.}
\label{fig:high}
\vspace{-0.2in}
\end{figure}

\begin{figure}
\includegraphics[width=0.6\linewidth]{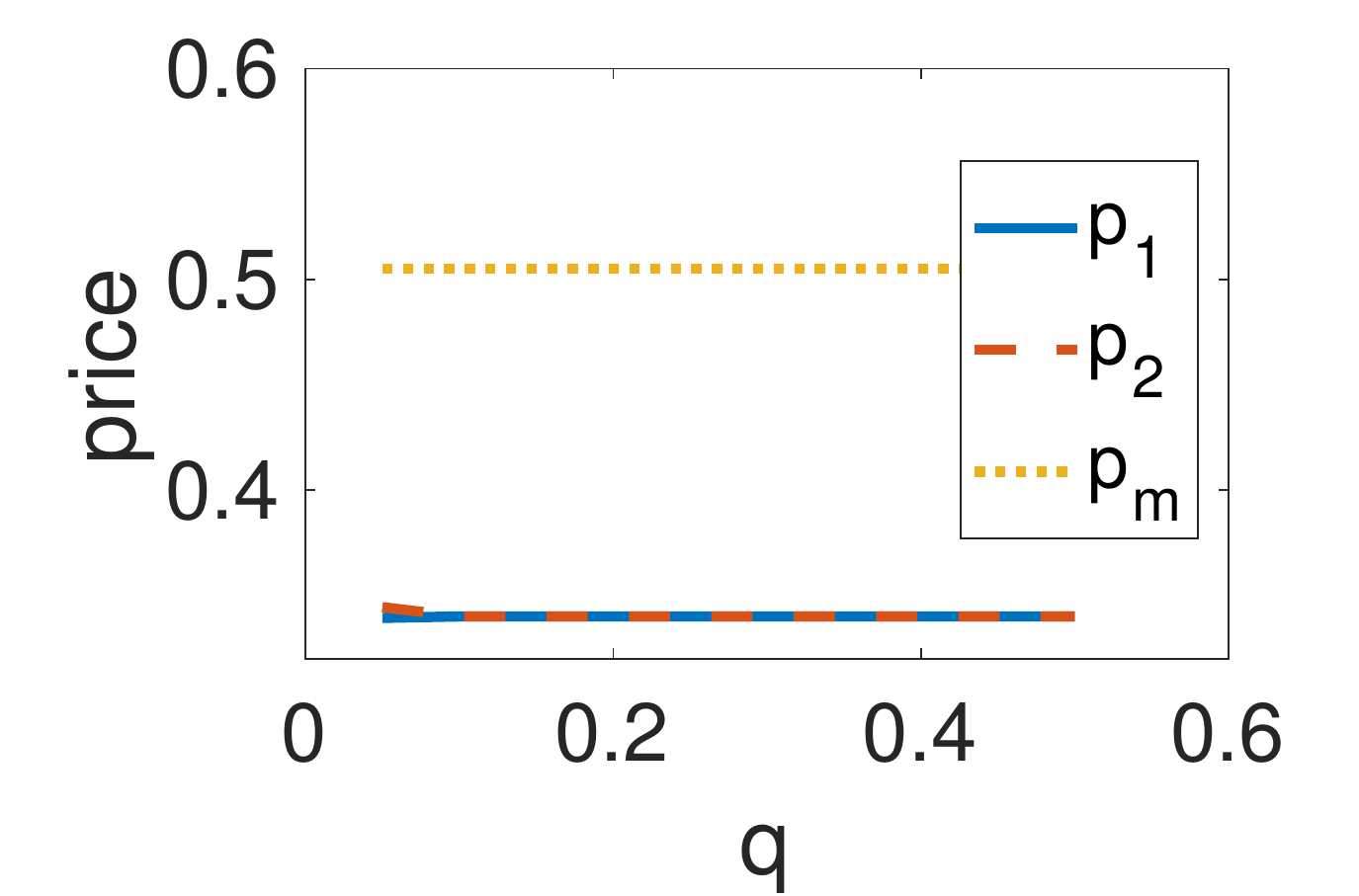}
\vspace{-0.1in}
\caption{Variation of Prices at time period $4$ (high demand regime) between any two nodes within cluster $1$ with $q$ and with $C_i=800$. $p_i$ denotes the prices of RSP $i$, $i=1,2$. $p_m$ denotes the monopoly price.}
\label{fig:highc}
\vspace{-0.2in}
\end{figure}

\subsection{Results}
\subsubsection{Impact of $q$}
We compare the prices  for any origin-destination pair within the same cluster. In Fig.~\ref{fig:q}, we show the price  for demand which originates and ends within cluster $1$. Recall that when $q$ is small, the demand is small from one cluster to another cluster. Also note that when $q$ is small,  RSP $1$ has  most of its initial vehicles stationed at the nodes within cluster $1$. Thus,  RSP $1$ chooses very high price for every demand within cluster $1$ since it faces little competition from RSP $2$. Hence, RSP $2$ and $1$ divide the entire region with RSP $1$ serving  cluster $1$ and RSP $2$ serving cluster $2$. Note that when $q$ is small, the prices almost become equal to the monopoly prices. 

Note that when $q$ increases, the prices of both the RSPs decrease even for the origin-destination pairs which originate and end within the same cluster. Thus, the competitive price becomes different from the monopoly price. When $q$ increases, the  demand becomes more balanced and the initial locations of the vehicles of the RSPs also become more balanced over the network. Hence, competition increases. When $q$ becomes $0.5$, the maximum competition is reached where the prices for both the RSPs become the same which is in accordance of our theoretical findings in Theorem~\ref{thm:sym}. 

Fig.~\ref{fig:q_diff} shows the prices when a demand originates from node within a cluster and ends at a node in the other cluster. The price is higher compared to the price for the demand originating and ending within the same cluster. Since, here, a RSP utilizes fewer vehicles to serve the demand as it takes more time to reach the destination which may result in a potential loss of revenue. Similar to Fig.~\ref{fig:q}, here, as $q$ increases the prices  of the RSPs decrease sharply and become equal when $q=0.5$. 

\subsubsection{Impact of High Demand}
Since the demand becomes higher during the first and third time period, the corresponding prices are also higher. Fig.~\ref{fig:high} depicts how  RSP $1$ selects prices for the origin-destination pairs within the same cluster. The price of RSP $1$ is smaller compared to RSP $2$ similar to Fig.~\ref{fig:q}. The prices of both the RSPs decrease as $q$ increases. 

Note that in the high demand period, the price from any demand originating from one node in a cluster to a node in other cluster reaches the maximum price $p_m$. Hence, the RSPs only compete for customers who want to travel only within its own cluster. This shows that in a high demand regime, the RSPs may not even serve some demand from one location to a distant location even if the demand is uniformly distributed. 

\subsubsection{Impact of High capacity}
In Fig.~\ref{fig:highc}, we consider the scenario where the RSPs have a larger number of vehicles. Specifically, we consider that $C_1=C_2=800$. The figure shows that since the number of vehicles becomes very large, the RSPs would compete more fiercely for nearly all values of $q$. Hence, the prices become almost independent of $q$, and the prices are much lower compared to the monopoly price. 

\section{Conclusions}

We considered a model for competition among RSPs that accounts for their pricing and routing decisions over time. Under an assumption about how vehicles price when demand is zero, we showed that this game admits a potential function and used this to characterize a GNE of the game.  We also compared a market with a single monopoly RSP to one with two competitive RSPs and showed that the impact of competition may depend on the spatial distribution of demand and the size of the RSPs.

\bibliographystyle{IEEEtran}
\bibliography{rsp_competition}

\begin{thebibliography}{10}
\providecommand{\url}[1]{#1}
\csname url@samestyle\endcsname
\providecommand{\newblock}{\relax}
\providecommand{\bibinfo}[2]{#2}
\providecommand{\BIBentrySTDinterwordspacing}{\spaceskip=0pt\relax}
\providecommand{\BIBentryALTinterwordstretchfactor}{4}
\providecommand{\BIBentryALTinterwordspacing}{\spaceskip=\fontdimen2\font plus
\BIBentryALTinterwordstretchfactor\fontdimen3\font minus
  \fontdimen4\font\relax}
\providecommand{\BIBforeignlanguage}[2]{{%
\expandafter\ifx\csname l@#1\endcsname\relax
\typeout{** WARNING: IEEEtran.bst: No hyphenation pattern has been}%
\typeout{** loaded for the language `#1'. Using the pattern for}%
\typeout{** the default language instead.}%
\else
\language=\csname l@#1\endcsname
\fi
#2}}
\providecommand{\BIBdecl}{\relax}
\BIBdecl

\bibitem{ad1}
\BIBentryALTinterwordspacing
S.~Shetty, ``Uber's self-driving cars are a key to its path to profitability.''
  [Online]. Available:
  \url{https://www.cnbc.com/2020/01/28/ubers-self-driving-cars-are-a-key-to-its-path-to-profitability.html}
\BIBentrySTDinterwordspacing

\bibitem{ad2}
\BIBentryALTinterwordspacing
J.~Vanian, ``Self driving cars are returning to work too.'' [Online].
  Available:
  \url{https://fortune.com/2020/07/07/self-driving-cars-are-returning-to-work-too/}
\BIBentrySTDinterwordspacing

\bibitem{gne}
F.~Facchinei and C.~Kanzow, ``Generalized nash equilibrium problems,''
  \emph{Annals of Operations Research}, vol. 175, no.~1, pp. 177--211, 2010.

\bibitem{feng}
\BIBentryALTinterwordspacing
G.~Feng, G.~Kong, and Z.~Wang, ``We are on the way: Analysis of on-demand
  ride-hailing systems,'' \emph{Manufacturing \& Service Operations
  Management}, vol.~0, no.~0, p. null, 0. [Online]. Available:
  \url{https://doi.org/10.1287/msom.2020.0880}
\BIBentrySTDinterwordspacing

\bibitem{banerjee}
\BIBentryALTinterwordspacing
S.~Banerjee, R.~Johari, and C.~Riquelme, ``Pricing in ride-sharing platforms: A
  queueing-theoretic approach,'' in \emph{Proceedings of the Sixteenth ACM
  Conference on Economics and Computation}, ser. EC '15.\hskip 1em plus 0.5em
  minus 0.4em\relax New York, NY, USA: Association for Computing Machinery,
  2015, p. 639. [Online]. Available:
  \url{https://doi.org/10.1145/2764468.2764527}
\BIBentrySTDinterwordspacing

\bibitem{sun}
\BIBentryALTinterwordspacing
L.~Sun, R.~H. Teunter, M.~Z. Babai, and G.~Hua, ``Optimal pricing for
  ride-sourcing platforms,'' \emph{European Journal of Operational Research},
  vol. 278, no.~3, pp. 783 -- 795, 2019. [Online]. Available:
  \url{http://www.sciencedirect.com/science/article/pii/S0377221719303819}
\BIBentrySTDinterwordspacing

\bibitem{taylor}
T.~A. Taylor, ``On-demand service platforms,'' \emph{Manufacturing \& Service
  Operations Management}, vol.~20, no.~4, pp. 704--720, 2018.

\bibitem{gurvich}
I.~Gurvich, M.~Lariviere, and A.~Moreno, ``Operations in the on-demand economy:
  Staffing services with self-scheduling capacity,'' in \emph{Sharing
  Economy}.\hskip 1em plus 0.5em minus 0.4em\relax Springer, 2019, pp.
  249--278.

\bibitem{bernstein}
F.~Bernstein, G.~DeCroix, and N.~B. Keskin, ``Competition between two-sided
  platforms under demand and supply congestion effects,'' \emph{Available at
  SSRN 3250224}, 2019.

\bibitem{yu2020}
J.~J. Yu, C.~S. Tang, Z.-J. Max~Shen, and X.~M. Chen, ``A balancing act of
  regulating on-demand ride services,'' \emph{Management Science}, vol.~66,
  no.~7, pp. 2975--2992, 2020.

\bibitem{bimpikis}
K.~Bimpikis, O.~Candogan, and D.~Saban, ``Spatial pricing in ride-sharing
  networks,'' \emph{Operations Research}, vol.~67, no.~3, pp. 744--769, 2019.

\bibitem{chen2019}
X.~Chen, C.~Chuqiao, and W.~Xie, ``Optimal spatial pricing for an on-demand
  ride-sourcing service platform,'' \emph{Available at SSRN 3464228}, 2019.

\bibitem{berry}
\BIBentryALTinterwordspacing
H.~Yu, E.~Wei, and R.~A. Berry, ``Analyzing location-based advertising for
  vehicle service providers using effective resistances,'' \emph{Proc. ACM
  Meas. Anal. Comput. Syst.}, vol.~3, no.~1, Mar. 2019. [Online]. Available:
  \url{https://doi.org/10.1145/3322205.3311077}
\BIBentrySTDinterwordspacing

\bibitem{YANG}
\BIBentryALTinterwordspacing
H.~Yang, S.~Wong, and K.~Wong, ``Demand–supply equilibrium of taxi services
  in a network under competition and regulation,'' \emph{Transportation
  Research Part B: Methodological}, vol.~36, no.~9, pp. 799 -- 819, 2002.
  [Online]. Available:
  \url{http://www.sciencedirect.com/science/article/pii/S0191261501000315}
\BIBentrySTDinterwordspacing

\bibitem{bimpikis2019}
K.~Bimpikis, S.~Ehsani, and R.~Ilk{\i}l{\i}{\c{c}}, ``Cournot competition in
  networked markets,'' \emph{Management Science}, vol.~65, no.~6, pp.
  2467--2481, 2019.

\bibitem{barquin}
J.~Barqu{\'\i}n and M.~V{\'a}zquez, ``Cournot equilibrium calculation in power
  networks: An optimization approach with price response computation,''
  \emph{IEEE transactions on power systems}, vol.~23, no.~2, pp. 317--326,
  2008.

\bibitem{cournot1}
M.~Abolhassani, M.~H. Bateni, M.~Hajiaghayi, H.~Mahini, and A.~Sawant,
  ``Network cournot competition,'' in \emph{Web and Internet Economics}, T.-Y.
  Liu, Q.~Qi, and Y.~Ye, Eds.\hskip 1em plus 0.5em minus 0.4em\relax Cham:
  Springer International Publishing, 2014, pp. 15--29.

\bibitem{niyato2008}
D.~Niyato and E.~Hossain, ``Competitive pricing in heterogeneous wireless
  access networks: Issues and approaches,'' \emph{IEEE network}, vol.~22,
  no.~6, pp. 4--11, 2008.

\bibitem{el2003}
R.~El-Azouzi, E.~Altman, and L.~Wynter, ``Telecommunications network
  equilibrium with price and quality-of-service characteristics,'' in
  \emph{Proceedings of ITC}, 2003.

\bibitem{garmani2019}
H.~Garmani, M.~El~Amrani, M.~Baslam, R.~El~Ayachi, and M.~Jourhmane, ``A
  stackelberg game-based approach for interactions among internet service
  providers and content providers,'' \emph{NETNOMICS: Economic Research and
  Electronic Networking}, vol.~20, no.~2, pp. 101--128, 2019.

\bibitem{monderer1996potential}
D.~Monderer and L.~S. Shapley, ``Potential games,'' \emph{Games and economic
  behavior}, vol.~14, no.~1, pp. 124--143, 1996.

\bibitem{rosen}
J.~B. Rosen, ``Existence and uniqueness of equilibrium points for concave
  n-person games,'' \emph{Econometrica: Journal of the Econometric Society},
  pp. 520--534, 1965.

\bibitem{lygeros}
D.~Paccagnan, B.~Gentile, F.~Parise, M.~Kamgarpour, and J.~Lygeros,
  ``Distributed computation of generalized nash equilibria in quadratic
  aggregative games with affine coupling constraints,'' in \emph{2016 IEEE 55th
  Conference on Decision and Control (CDC)}.\hskip 1em plus 0.5em minus
  0.4em\relax IEEE, 2016, pp. 6123--6128.

\bibitem{arnob_tnse}
A.~{Ghosh} and S.~{Sarkar}, ``Pricing for profit in internet of things,''
  \emph{IEEE Transactions on Network Science and Engineering}, vol.~6, no.~2,
  pp. 130--144, 2019.

\bibitem{altman2010}
E.~Altman, P.~Bernhard, S.~Caron, G.~Kesidis, J.~Rojas-Mora, and S.~Wong, ``A
  study of non-neutral networks with usage-based prices,'' in
  \emph{International Workshop on Economic Traffic Management}.\hskip 1em plus
  0.5em minus 0.4em\relax Springer, 2010, pp. 76--84.

\end{thebibliography}
\end{document}